\documentclass[11pt]{elsarticle}
\usepackage[utf8]{inputenc}
\usepackage{color}
\usepackage{amsmath,amssymb,latexsym}
\usepackage[vlined,english,boxed]{algorithm2e}
\usepackage[english]{babel}
\usepackage{amsmath} 
      \IncMargin{1em}  
      \SetKwInput{KwData}{Input}
      \SetKwInput{KwResult}{Output}
      \SetKwIF{If}{ElseIf}{Else}{If}{then}{else if}{else}{endif}
      \SetKwFor{ForEach}{For Each}{Do}{done}
      \SetKwFor{While}{While}{do}{done}
      \SetKwFor{Procedure}{Procedure}{}{}
      \SetKw{Return}{Return}%
      \SetKwComment{Comment}{/* }{ */}
      \SetSideCommentRight
      \DontPrintSemicolon

\usepackage{hyperref}
\usepackage{cleveref}      
\usepackage{microtype}

\newtheorem{definition}{Definition}
\newtheorem{theorem}{Theorem}
\newtheorem{lemma}{Lemme}
\newtheorem{corollary}{Corollary}

\newtheorem{proposition}{Proposition}

\newenvironment{proof}{\noindent\textit{Proof.}}{{}\hfill $\Box$\\}
\newcommand{\card}[1]{\left|{#1}\right|}

\newcommand{\set}[1]{\left\{{#1}\right\}}

\newcommand{\A}{\mathcal{A}}
\newcommand{\AT}{\mathcal{AT}}
\newcommand{\R}{\mathcal{R}}
\newcommand{\C}{\mathcal{C}}

\journal{}

\makeatletter
\def\ps@pprintTitle{%
	\let\@oddhead\@empty
	\let\@evenhead\@empty
	\def\@oddfoot{}%
	\let\@evenfoot\@oddfoot}
\makeatother

\begin{document}

\begin{frontmatter}
\title{Forwarding Tables Verification through\\ Representative Header Sets}

\author[liafa]{Yacine Boufkhad}
\author[inria]{Ricardo de la Paz}
\author[inria]{Leonardo Linguaglossa}
\author[nokia]{Fabien Mathieu}
\author[nokia]{Diego Perino}
\author[inria,liafa]{Laurent Viennot}

\address[liafa]{Université Paris Diderot - Paris 7}
\address[inria]{Inria}	
\address[nokia]{Nokia Bell Labs France}

\begin{abstract}
	Forwarding table verification consists in checking the distributed data-structure resulting from the forwarding tables of a network.  A classical concern is the detection of loops. We study this problem in the context of software-defined networking (SDN) where forwarding rules can be  arbitrary bitmasks (generalizing prefix matching) and where tables are updated by a centralized controller. Basic verification problems such as loop detection are NP-hard and most previous work solves them with heuristics or SAT solvers.
	
	We follow a different approach based on computing a representation of the header classes, i.e. the sets of headers that match the same rules.  This representation consists in a collection of representative header sets, at least one
	for each class, and can be computed centrally in time which is polynomial in the number of classes.  Classical verification tasks can then be
	trivially solved by checking each representative header set. 
	In general, the number
	of header classes can increase exponentially with header length, but it remains
	polynomial in the number of rules in the practical case where rules are
	constituted with predefined fields where exact, prefix matching 
	or range matching is applied in
	each field (e.g., IP/MAC addresses, TCP/UDP ports).  
	We propose general techniques that work in
	polynomial time as long as the number of classes of headers is polynomial and
	that do not make specific assumptions about the structure of the sets
	associated to rules. The efficiency of our method rely on the fact that
	the data-structure representing rules allows efficient computation of
	intersection, cardinal and inclusion.
	
	Finally, we propose an algorithm to maintain such representation in presence of updates (i.e., rule insert/update/removal). We also provide
	a local distributed algorithm for checking the absence of black-holes and
	a proof labeling scheme for locally checking the absence of loops.
\end{abstract}

\end{frontmatter}


%
%
%
%

\section{Introduction}

The diagnosis of network problems, such as the existence of loops (some packets may loop in the network) or black-holes (some nodes drops some packets that can be delivered elsewhere), is a
challenging issue. Indeed, routing decisions take into consideration several fields of packet headers, and the full header space cannot be exhaustively scanned. For example, the size of the IP destination field alone is $2^{32}$ for IPv4 and
$2^{128}$ for IPv6. The overall correctness of the forwarding tables of all devices of a network cannot
rely solely on the correctness of the protocols used to build them. The reasons are: different routing and management protocols interact simultaneously; manual configurations are operated by possibly several administrators.
Furthermore, the forwarding process of a large network is the result of the
interference of several device types (e.g., router, switches, middleboxes or
firewalls) and mechanisms (e.g., Ethernet, IP, NAT or MPLS) introduced in the
network to provide services including and beyond layer 2-3 forwarding.

The recent advent of Software Defined Networking (SDN)~\cite{Peresini:2013:OCP:2491185.2491205,Kotronis:2012:ORC:2390231.2390241,Monsanto:2013:CSN:2482626.2482629,Katta:2013:ICU:2491185.2491191} constitutes both an
opportunity and an issue for network problem diagnosis. The opportunity is that 
forwarding tables can be managed by a centralized SDN controller, where they can
be verified.  The issue is that SDN allows forwarding rules to be specified over multiple fields with arbitrary wildcard bitmasks (that generalize prefix
matching) covering an increasing number of protocol headers. 
Specifically, the predicate of
a rule $r$ is represented by a
string $m$ of $\ell$ letters in $\set{0,1,*}$ such that a header $h$ is in the
set when $m_i=*$ or $h_i=m_i$ for all bit position $i\in 1..\ell$. 
 As noted in
\cite{anteater} most verification tasks become NP-hard in this context.
Consider
for example a task as simple as testing whether some header may fail to match
any rule in a forwarding table.  One can easily see that it is equivalent to a
SAT problem with variables $h_1,\ldots,h_\ell$. 

In this paper, we try to provide some fundamental ground to the approach
proposed by Khurshid et al.~\cite{veriflow} where a representation of equivalent
classes of headers is maintained. Each verification task is then performed on each class instead of each possible header. Classes can be defined based on
identical network behavior or simply on identical set of rules matched.
The advantage of the latter definition is that it is order invariant 
(changing the priority of rules in a table does not affect the classes) and is thus
more stable with respect to minor changes. Indeed, Khurshid et al. implicitly
rely on this finer notion of class for tractability reasons.
The drawback with respect to the former definition is that the number of classes
may be larger. 
However, when we consider decision rules that appear in practice using
prefix matching or range matching on a fixed number $d$ of fields, the
number $c$ of classes for $n$ rules is $O(n^d)$ and remains polynomial.
We indeed believe that this bound is not tight and that verification tasks
appeared tractable in implementations such as 
\cite{anteater,libra, veriflow, netplumber}
because $c$ remains relatively small in practice.  
Not surprisingly, $c$ can be exponential in $n$ in general.
We thus propose to re-investigate the problem of forwarding table verification
using the number $c$ of header classes as main complexity parameter.

Our work is inspired by seminal work of Boutier and Chroboczek~\cite{boutier}
who explore the problem of disambiguating a forwarding table. Ambiguity may
arise in practical implementations of source sensitive routing when
two decision rules intersect without one having clearly precedence over the other.
They note that a more specific rule  always have precedence in practical implementation of IP stack (a rule that applies on set $r$ is more specific than a rule that applies on a set $r'$ when $r\subseteq r'$). They thus propose to
add more specific forwarding rules in case of conflicts to eliminate any
ambiguity (typically a rule for $r\cap r'$ is inserted when $r$ and $r'$
are in conflict). They then explicit a sufficient and necessary condition
for a set of rules to be ambiguity free that they call weak completion.
However, they use a stronger condition called strong completion obtained
by adding all possible intersections of rules. They propose 
algorithms for incrementally
maintaining strong completion of the set of rules in the case where
rules consist in prefix matching on two fields.
This can be considered as a first step for representing classes of headers
since their work allows to represent intersection of rules. However, a 
further step is needed as classes are intersections of rules and 
complementary of rules.

\paragraph{Related work: }
Most previous work compute sub-classes of headers by means of combining
rules by intersection and set difference.
Veriflow, the solution presented by Khurshid et al.~\cite{veriflow}
store these combinations in a multi-dimensional
trie assuming that rules are based on multi-field matching
with wildcard masks. The intersection of two wildcard 
masks is either empty or can be represented with a single mask. However,
the difference of two sets cannot be represented with a single mask in general.
The size of their representation may thus grow unexpectedly due
to set differences. However, in practice, masks are often prefixes
and correspond to a range of natural numbers. In that case, 
the difference of two
prefixes is expressed as two ranges at most. The efficiency
of the solution certainly benefit from this optimization.
HSA~\cite{hsa} and NetPlumber~\cite{netplumber} rely
similarly on representation of header space through combinations of rules with
intersection and set difference. Interestingly,
some bound guaranty on representation size is derived in \cite{netplumber}
when the network satisfies  a property called linear fragmentation.
Anteater~\cite{anteater} represents verification tasks 
as boolean satisfiability problems and solve them using a SAT solver. 
The boolean formulas
generated by this approach equivalently represent combinations of rules
with intersection and set difference.
Libra~\cite{libra} assumes that rules are based on prefix matching and
is concerned by solving the verification tasks in a high performance perspective
using MapReduce framework. One can show that the number $c$ of header classes
is linear in the number of rules when they all are prefixes. The context
of prefix matching rules is thus significantly simpler.

In the context of forwarding table disambiguation introduced by
Boutier and Chroboczek~\cite{boutier}, the problem is to cover conflicts
between rules. The set of headers were two rules are in conflict is indeed
their intersection. Considering combinations of rules with intersection
solely is thus sufficient for covering all conflicts. This context is thus
simpler than the context of forwarding table verification were a header
class is indeed the intersection of the rules matched by headers of the class
minus the rules not matched. 

\paragraph{Our contribution: }
Our main contribution is to show that an exact representation of the
$c$ classes generated by $n$ rules can be computed in polynomial time
in $c$. Surprisingly, this can be done without computing any set difference,
solely using intersections, inclusion tests and cardinal computations.
For that purpose, we
show that the notion of weak completion introduced by Boutier and
Chroboczek~\cite{boutier} is tightly related to that of header classes and is a
basic brick for constructing a representation of the classes generated
by a set of rules in $O(n\ell c^2)$ time where $\ell$ is the header
bit length.
This representation is optimal with respect to space as it consists in
exactly $c$ representative header sets. Each representative header set $s$
is represented similarly as a rule. The rules matching the headers of the
class associated to $s$ can be identified as those containing $s$.
Interestingly, this representation thus allows to efficiently compute
the routing decisions taken for all headers in the class without
having to find a representative header  
in the class, a task that is NP-hard in general.
Our technique do not require any specific data-structure for
representing rules and
header sets, it just relies on the fact this representation has size
$O(\ell)$ and that intersection, inclusion, and cardinal can be computed
in $O(\ell)$ time.

Once such a representation is computed all classical verification tasks
can be performed in $O(f\ell c)$ time where $f$ is the sum of table sizes. 
We thus obtain that
verification of forwarding tables can be performed in polynomial time
with respect to the number $c$ of header classes that should be tested.
Note that this does not contradict the difficulty of verification tasks 
with respect to the size of the input
since $c$ can be exponential in the number $n$ of rules in general.
Our algorithms are indeed incremental and allow to maintain 
such a representation for a collection $\R$ of rules under insertion 
and deletion of rules in $\R$. Each operation can be performed in 
$O(\ell c^2)$ time. As we make minimal assumptions about the data-structure
used for representing header sets, we did not try to optimize these algorithms.

Stepping away from SDN, we show that in a distributed
context a basic verification task such as absence of black-holes can be 
performed locally with our techniques. We show how our representation
allows to produce a proof labeling scheme allowing to locally check
the absence of loops. Distributed verification of absence of loops is
explored when the network uses more specific routing, i.e. the rules
matching a packet header become more specific along the route, 
an assumption which is natural in hierarchical networks.

The rest of the paper is organized as follows.
\Cref{sec:problem-statement} presents our network model the problem of
forwarding table verification. \Cref{sec:preliminaries} introduces basic
tools concerning algebra of sets built upon the notion of weak completeness.  In
\Cref{sec:representative-header-sets}, we derive our main result
concerning the computation of an optimal representation of header classes
through a process that we call weak completion.
\Cref{sec:distributed-verification} describes how main verification tasks
(absence of black-holes and loops) can be performed in a distributed
manner. Finally, \Cref{sec:conclusion} concludes the paper.

\section{Problem statement}\label{sec:problem-statement}  

\subsection{Network model}

We consider a general model of network where packets are forwarded according to
the content of their fixed length header. 
A \emph{network instance} $\mathcal{N}$  is given by
the header bit-length $\ell$, a graph $ G=(V,E) $ 
and the \emph{forwarding table} $T(u)$ of each node $u\in V$. 
A node $u\in V$ is also called a \emph{router}. Each forwarding table
$T(u)$ is an ordered list of forwarding rules $r_1,\ldots,r_k$.  Each forwarding rule, or
simply \emph{rule}, is made of a predicate and an action to apply on any packet whose header matches the predicate. For ease of notation, we also call rule the set of headers that match a given rule, so we say that a header $h$ \emph{matches} a rule $r$ when $h\in r$.
We consider three possible actions: forward the packet to a neighbor, drop the packet, or deliver the
packet (when the packet is arrived at destination).
The priority of rules is given by the order of the rules: when a packet with header $h$ arrives at node
$u$, the first rule matched by $h$ is applied. Equivalently, the rule $r_i$ is applied when $h\in r_i\cap \overline{r_1} \cap \cdots \cap
\overline{r_{i-1}}$, where $\overline{r}$ denotes the complement of $r$.  When
no match is found (i.e. $h\in \overline{r_1} \cap \cdots \cap \overline{r_k}$),
the packet is dropped. We say that $u$ takes action \emph{$v$-forward} 
(respectively \emph{drop}, or \emph{deliver}) on header $h$ when the first rule matching $h$ in $T(u)$ indicates 
to forward the packet to $v$ (respectively to drop it, or to deliver it).
The collection $\R=\cup_{u\in V}T(u)$ of the sets modeling any rule in
the network is called the \emph{rule collection} of $\mathcal{N}$.

In practice forwarding rules are represented through simple data structures
such as a wildcard mask or a range of integers when considering the bits of
the header as the binary representation of an integer.
In both cases the representation of the intersection $r \cap r'$ of two rules
$r$ and $r'$ can be efficiently computed and represented within the same type of
data structure. Similarly, we can test efficiently whether $r\subseteq r'$ 
and compute easily the number $\card{r}$ of headers matching a rule $r$ in 
both cases.
This remarks also apply in the context of SDN or firewalls, where 
multi-field rules are considered. In SDN, a rule is the Cartesian product 
of several wildcard masks. In firewalls, rules are typically expressed as
Cartesian products of ranges.
In the sequel, we 
assume that the data structure  representing a rule uses space $O(\ell)$, 
and that intersection, inclusion and cardinal
can be computed in $O(\ell)$ time. 

Let $H$ denote the set of all headers with $\ell$ bits.  
We say that two headers $h$ and $h'$ are
\emph{rule equivalent} for a collection $\R$ of rules if they match the same
rules in $\R$ and define the \emph{header classes} of the collection $\R$ as the
equivalence classes of this relation. Two rule equivalent headers obviously
follow the same forwarding decisions in all routers of any network $\mathcal{N}$
with rule collection $\R$. 
We can thus define a directed graph $G_c$ for each header
class of $\R$ with vertex set $V$.  Each node in $G_c$
is labeled as drop, forward or
deliver according to the action it takes on headers in $c$.  The arcs of $G_c$
correspond to forward actions: $uv\in E(G_h)$ when $u$ takes action $v$-forward
on headers in $c$.

\subsection{Verification tasks}

Forwarding table verification consists in verifying that some
network wide properties are satisfied. Such properties are 
classically among:

\begin{itemize}
\setlength{\itemsep}{0pt}%
\item NO-LOOP: no packet can loop in $G$, i.e. there is no header class
  $c$ such that $G_c$ has a loop.
\item NO-BLACKHOLE: no packet can be dropped by a node when it is delivered
  or forwarded by another node: for all header class $c$, 
  all nodes in $G_c$ have
  label drop, or none of them.
\item REACHABILITY$(u,v)$: for fixed nodes $u,v$ in $V$, some packet can 
  travel from $u$ to $v$, i.e. $G_c$ has a path from $u$ to $v$ for some
  header class $c$.
\item CONSISTENCY$(u,v)$: for fixed nodes $u,v$ in $V$, any packet has
  same fate when initiated at $u$ or $v$.
\end{itemize}

Most prominently, NO-LOOP and NO-BLACKHOLE together imply that the forwarding
tables of a network $G$ are basically correct: NO-LOOP implies that for any
header class $c$, $G_c$ is a
forest rooted at nodes with non-forward label. Additionally, NO-BLACKHOLE
ensures that for any header $h$ that can be delivered at some node in $G$, the
graph $G_c$ of its header class $c$ 
do not contain any drop label, it is thus a forest rooted at
nodes that deliver $h$. (Note that in complex hierarchical networks, some
destinations can be reached through several nodes: consider a multi-homed
sub-network for example.)

All the verification tasks we are aware of can be stated either as 
$\forall c, P_c$ or $\exists c, P_c$ where $P_c$ is some basic property
that can be tested in linear time on $G_c$.

\section{Preliminaries}\label{sec:preliminaries}

As we model rules as sets of headers, we will interchangeably use the terms
rule and set. We use the term rule to emphasize that the considered set
is associated to a rule in a forwarding table of some network. 
All considered sets are subset of $H$, the set of all possible headers
(headers may be simply called elements).
We use the term \emph{collection} for a set of sets.
As any boolean combination of decisions taken by the routers of a network
can be expressed in terms of set-algebraic operations, we
review some basic facts about algebras of sets.

\subsection{Algebra of sets}

Given a collection $\R$, we let $\A(\R)$ denote the algebra of sets
generated by $\R$, that is the minimal collection including $\R$ that
is closed under intersection, union and complement. An \emph{atom} is any 
non-empty element of $\A(\R)$ which is minimal for inclusion.
The atoms of $\A(\R)$ form a partition of $H$ and $\A(\R)$ is isomorphic
to the power set of its atom collection. This classical result
can be highlighted by the following proposition.

\begin{proposition}[Folklore]
\label{prop:atoms}
The header classes of a collection $\R=r_1,\ldots,r_n$ are the atoms of
$\A(\R)$. In other words, each atom $A$ of $\A(\R)$ is equal to
$r'_1\cap\cdots\cap r'_n$ for some sets $r'_1,\ldots,r'_n$ such that each $r'_i$
is either $r_i$ or $\overline{r_i}$.
\end{proposition}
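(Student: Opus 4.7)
The plan is to exhibit an explicit bijection between header classes and the nonempty sets of the form $A_S := \bigcap_{i\in S} r_i \cap \bigcap_{i\notin S}\overline{r_i}$ for $S\subseteq\range{1}{n}$, and then to show that these $A_S$ are precisely the atoms of $\A(\R)$.

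First I would observe that the header class of a header $h$ is exactly $A_{S(h)}$, where $S(h)=\set{i\st h\in r_i}$ is the signature of $h$. Indeed two headers $h,h'$ are rule equivalent iff they match exactly the same rules, iff $S(h)=S(h')$, iff they lie in the same $A_S$. This already proves the "in other words" form of the statement: each header class is of the required intersection shape. It also shows that distinct $A_S, A_T$ are disjoint (pick any index in the symmetric difference of $S$ and $T$).

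Second, I would identify these nonempty $A_S$ with the atoms of $\A(\R)$. Each $A_S\in\A(\R)$ by construction. The key auxiliary fact is that every element of $\A(\R)$ is a union of $A_S$'s: the collection of such unions contains each $r_i=\bigcup_{S\ni i} A_S$, and is visibly closed under union, intersection (using disjointness) and complement, so by minimality of $\A(\R)$ it contains $\A(\R)$. From this, if $B\in\A(\R)$ satisfies $\emptyset\neq B\subseteq A_S$, then the decomposition of $B$ as a union of $A_T$'s, combined with the disjointness of distinct $A_T$'s, forces $B=A_S$; hence $A_S$ is an atom. Conversely, given any atom $A$, pick $h\in A$; then $A\cap A_{S(h)}$ is a nonempty element of $\A(\R)$ contained in $A$, so by minimality equals $A$, giving $A\subseteq A_{S(h)}$, and since $A_{S(h)}$ is itself an atom we conclude $A=A_{S(h)}$.

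There is no real obstacle here; the only point requiring a little care is the disjunctive normal form claim that every element of $\A(\R)$ decomposes as a union of $A_S$'s, and I would dispatch it by the closure argument above rather than by an induction on formula structure.
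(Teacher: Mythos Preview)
Your proof is correct and follows essentially the same approach as the paper: both identify the header class of $h$ with the set $\bigcap_i r'_i$ (your $A_{S(h)}$), and both establish that the finite unions of these classes form a collection containing $\R$ and closed under union, intersection and complement, hence equal to $\A(\R)$ by minimality, from which the identification of classes with atoms follows. Your version is slightly more explicit in verifying atomicity (via the disjointness argument for $B\subseteq A_S$), but the strategy is the same.
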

We include a proof for the ease of readers unfamiliar with algebra of sets.

\begin{proof}
Given an element $h\in H$ let $r'_i=r_i$ if $h\in r_i$ and $r'_i=\overline{r_i}$
otherwise. The header class of $h$ is thus
$r'_i\cap\cdots\cap r'_n$ (the set of headers matching the same rules as $h$)
which is an element of $\A(\R)$.
Let $\C$ denote the collection of any finite union of such classes
(including the empty union $\emptyset$). 
As $\C$ is clearly closed by intersection, union and complement and $\C$
includes $\R$ (any rule $r_i$ can be written as the union of classes of headers
matching $r_i$), we indeed have $\C=\A(\R)$ by minimality of $\R$.
Any atom is thus an element of $\C$ and can be written as a union of classes.
Each atom must indeed be a class by minimality of atoms.
Each class is minimal as it intersects no other atom than itself and is
thus an atom.
(Note that $\C=\A(\R)$ is isomorphic to the power set of the set of atoms.)
\end{proof}
The header classes of a collection $\R$ will thus
be simply called the \emph{atoms} of $\R$ in the sequel.

\subsection{Weak completeness}

The following definition is borrowed from Boutier and Chroboczek~\cite{boutier}.
We have slightly modified it with an additional requirement concerning
the set $H$ of all elements.

\begin{definition}[\cite{boutier}]
A collection $\R$ is 
\emph{strongly complete} (respectively \emph{weakly complete})
iff $H\in \R$ (respectively $H=\cup_{r\in \R}r$)
and for any sets $r,r'\in \R$, we have $r\cap r'\in \R$
(respectively $r\cap r' = \cup_{r''\subseteq r\cap r'} r''$).
\end{definition}

Obviously, a strongly complete collection is also weakly complete.
We will see that the notion of weak
completeness is appropriate for identifying the atoms of any collection.
We first identify the atoms of a weakly complete collection.

\begin{proposition}
\label{prop:weak}
The atoms of any weakly complete collection $\R$
are the sets $a(r)=r \setminus \cup_{r'\subsetneq r}r'$ for
$r \in \R$ such that $a(r)$ is not empty.
A set $r\in \R$ such that $a(r)\not=\emptyset$ is said to be
\emph{atom representative}. We let $\AT(\R)$ denote the collection of
the atom representative sets of $\R$.
\end{proposition}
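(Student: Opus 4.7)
The plan is to prove the proposition by establishing two inclusions: first that every non-empty $a(r)$ is an atom, and second that every atom arises as such an $a(r)$. Throughout I will identify the atoms via Proposition~\ref{prop:atoms}, i.e., as the header classes, so two elements lie in the same atom exactly when they match the same rules of $\R$.

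The first step is to show that for every $r\in\R$ with $a(r)\neq\emptyset$, all headers $h\in a(r)$ match exactly the same rules of $\R$, namely the rules $r'\in\R$ with $r\subseteq r'$. The direction $r\subseteq r' \Rightarrow h\in r'$ is immediate since $h\in r$. For the converse, I would distinguish the rules $r'$ with $r'\subsetneq r$ (for which $h\notin r'$ holds by definition of $a(r)$) from the rules $r'\in\R$ satisfying $r\not\subseteq r'$ and $r'\not\subseteq r$. This last case is where the weak completeness hypothesis enters and is the main obstacle: I must exclude the possibility that $h\in r'$. Assuming $h\in r'$ gives $h\in r\cap r'$, and weak completeness yields $r\cap r'=\cup_{r''\subseteq r\cap r'} r''$, so there exists $r''\in\R$ with $r''\subseteq r\cap r'\subsetneq r$ and $h\in r''$. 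This contradicts $h\in a(r)$, completing the case analysis.

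Having shown that all elements of $a(r)$ match precisely the rules $r'\supseteq r$ (with $r'\in\R$), I would then argue that $a(r)$ coincides with this header class, not merely lies inside it. If $h'$ matches exactly the same rules as some fixed $h\in a(r)$, then $h'\in r$ and $h'\notin r''$ for every $r''\in\R$ with $r''\subsetneq r$ (since such an $r''$ does not contain $r$), so $h'\in a(r)$. Hence $a(r)$ is exactly the atom of $\A(\R)$ containing $h$.

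It remains to prove the converse inclusion: every atom $A$ of $\A(\R)$ equals $a(r)$ for some $r\in\R$. Picking any $h\in A$, the weak completeness assumption $H=\cup_{r\in\R}r$ guarantees that some rule of $\R$ contains $h$, and since $\R$ is finite I may choose $r\in\R$ that is minimal for inclusion among those containing $h$. By minimality, $h\notin r'$ for any $r'\in\R$ with $r'\subsetneq r$, so $h\in a(r)$. The previous step then implies $A=a(r)$, finishing the proof.
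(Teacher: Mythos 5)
Your proof is correct and follows the same overall route as the paper's: weak completeness enters in exactly the same two places (the covering identity $r\cap r'=\cup_{r''\subseteq r\cap r'}r''$ to show that no rule can separate two headers of $a(r)$, and $H=\cup_{r\in\R}r$ to extract a minimal $r\in\R$ containing a given header of an atom). The one genuine difference is in how you conclude that a non-empty $a(r)$ is a \emph{whole} atom rather than a proper subset of one: you explicitly identify the rules matched by every $h\in a(r)$ as exactly the $r'\in\R$ with $r\subseteq r'$, and then check directly that any header matching that same rule set lies in $a(r)$; the paper instead observes that $a(r)\in\A(\R)$ and is contained in an atom, hence equals it by minimality of atoms. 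Your version is slightly more elementary and self-contained, and as a bonus it characterizes the atom represented by $r$ in terms of the inclusion order on $\R$. One small omission: the paper's proof also establishes that the sets $a(r)$ are pairwise disjoint, so that distinct atom representative sets $r\neq r'$ yield distinct atoms. That is not needed for the set equality you prove, but it is used implicitly later (for instance, the recurrence $\card{a(r)}=\card{r}-\sum_{r'\subsetneq r}\card{a(r')}$ in Proposition~\ref{prop:card} relies on it). It does follow in one line from your characterization --- if $a(r)$ and $a(r')$ shared a header, the rules containing $r$ and those containing $r'$ would coincide, forcing $r\subseteq r'\subseteq r$ --- so it is worth stating explicitly.
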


\begin{proof}
We first note that two headers $h,h'$ in $a(r)$ 
for some set $r\in\R$ must belong to the same sets $r'\in\R$ implying that
$a(r)$ is included in some atom of $\R$ according to 
\Cref{prop:atoms}
and is thus an atom as  $a(r)\in\A(\R)$. 
For the sake of contradiction, suppose that
some $r'\in\R$ contains $h'$ and not $h$. Weak completeness implies
$r\cap r'=\cup_{r''\subseteq r\cap r'}r''$ and $h'$ must be in some set $r''\in\R$
such that $r''\subseteq r\cap r'$. As $r''$ does not contain $h$, it
is strictly included in $r$ and it is disjoint from $a(r)$ by definition 
of $a(r)$. This then contradicts $h'\in a(r)\cap r''$.

We now prove that any atom $A$ of $\R$ is equal to $a(r)$ for some $r\in\R$.
Consider $h\in A$.
By weak completeness, we have $H=\cup_{r\in \R}r$ and $h$ must be in some 
$r\in \R$. Consider such a set $r\in \R$ containing $h$ which is minimal for
inclusion. If $h$ is not in $a(r)$, it must be in some $r'\subsetneq r$
according to the definition of $a(r)$. This then contradicts the 
minimality of $r$. We thus have $h\in a(r)$ and $a(r)=A$ since $a(r)$ is
an atom.

To conclude, we finally show 
that the sets $a(r)$ for $r\in \R$ are pairwise disjoint, implying
that two non empty such sets $a(r)$ and $a(r')$ must be different for
$r\not= r'$.  For
the sake of contradiction, suppose $a(r)\cap a(r')\not=\emptyset$ for
$r\not= r'$. If $r\subsetneq r'$ then $r$ is disjoint of $a(r')$ by definition
and so is $a(r)$ which is included in $r$. 
Now consider the case $r\cap r'\subsetneq r$.
From the weakly complete property, we have 
$r\cap r'= \cup_{r''\subseteq r\cap r'}r''$. As $r''\subseteq r\cap r'$
implies $r''\subsetneq r$ and $a(r) =  r \setminus \cup_{r''\subsetneq r}r''$,
we can conclude that $a(r)$ cannot contain any element in $r\cap r'$ and it is
thus disjoint from $a(r')$ since $a(r')\subseteq r'$.
\end{proof}
The following proposition shows that the atom representative sets
of a weakly complete collection can be identified efficiently.

\begin{proposition}
\label{prop:card}
Given a weakly complete collection $\R$, the cardinals of the
sets $a(r)=r \setminus \cup_{r'\subsetneq r}r'$ for $r \in \R$
can be computed in time $O(|\R|^2\ell)$.
\end{proposition}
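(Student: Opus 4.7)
The plan is to exploit the fact that the sets $a(r)$ (for $r\in\R$) partition $r$ itself, which yields a triangular linear system that can be solved bottom-up by inclusion.

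\medskip
\noindent\textbf{Key identity.} The first step is to establish that for every $r\in\R$,
\[
|r| \;=\; \sum_{r''\in \R,\; r''\subseteq r} |a(r'')|.
\]
To see this, pick any $h\in r$ and let $r''\in\R$ be minimal (for inclusion) among the sets of $\R$ containing $h$; such an $r''$ exists by weak completeness ($H=\bigcup_{r\in\R} r$). By weak completeness applied to $r\cap r''$, there is some $r'''\in\R$ with $r'''\subseteq r\cap r''$ and $h\in r'''$; minimality of $r''$ forces $r'''=r''$, hence $r''\subseteq r$. Moreover, by the same argument used in the proof of \Cref{prop:weak}, $h\in a(r'')$. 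Since the nonempty $a(r'')$ form a partition of $H$ (\Cref{prop:weak} plus \Cref{prop:atoms}), summing over $h\in r$ yields the identity above (empty $a(r'')$ simply contribute $0$).

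\medskip
\noindent\textbf{Bottom-up computation.} Rearranging,
\[
|a(r)| \;=\; |r| \;-\; \sum_{r''\in\R,\; r''\subsetneq r} |a(r'')|.
\]
Since $r''\subsetneq r$ implies $|r''|<|r|$, we process the rules of $\R$ in nondecreasing order of cardinality: when $r$ is processed, every $|a(r'')|$ for $r''\subsetneq r$ is already known, and $|a(r)|$ is obtained by a single subtraction of precomputed values.

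\medskip
\noindent\textbf{Complexity.} I would first compute $|r|$ for each $r\in\R$ in total time $O(|\R|\,\ell)$ using the cardinal primitive, then sort rules by cardinality in $O(|\R|\log|\R|)$, then precompute the inclusion relation by testing $r''\subseteq r$ for every ordered pair in $O(\ell)$ each, for a total of $O(|\R|^2\ell)$. Finally, evaluating the recurrence for every $r$ involves at most $|\R|-1$ additions, so this phase costs $O(|\R|^2)$. Overall, the bottleneck is the pairwise inclusion tests, giving $O(|\R|^2\ell)$ as claimed.

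\medskip
\noindent\textbf{Main obstacle.} The only nontrivial step is the partition identity $|r|=\sum_{r''\subseteq r}|a(r'')|$, specifically the fact that the minimal containing set $r''$ of any $h\in r$ is itself contained in $r$. This is where weak completeness is crucially used, via the decomposition of $r\cap r''$ into subsets lying in $\R$; everything else is routine.
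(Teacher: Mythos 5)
Your proof is correct and follows essentially the same route as the paper's: establish the disjoint decomposition $r=\bigcup_{r''\subseteq r}a(r'')$, invert it into the recurrence $|a(r)|=|r|-\sum_{r''\subsetneq r}|a(r'')|$, and evaluate it by dynamic programming over a topological (here, cardinality) order after the $O(|\R|^2\ell)$ precomputation of the inclusion relation. The only cosmetic difference is that you take the globally minimal set of $\R$ containing $h$ and invoke weak completeness to show it lies inside $r$, whereas the paper takes a minimal containing set among the subsets of $r$ and needs no weak completeness at that step (weak completeness is still needed for the pairwise disjointness of the $a(r'')$ via \Cref{prop:weak}, which both arguments use to turn the union into a sum).
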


\begin{proof}
We first prove $r= \cup_{r'\subseteq r} a(r')$ for all $r\in \R$.
As $a(r')\subseteq r'$ for each $r'$, it is sufficient to prove 
$r \subseteq \cup_{r'\subseteq r} a(r')$. Consider $h\in r$. If no rule
$r'\subsetneq r$ contains $h$, then $h$ is in $a(r)$ by definition.
Otherwise, consider a rule $r'\subsetneq r$ containing $h$ which is minimal
for inclusion. Then $h\in a(r')$ since no rule $r''\subsetneq r'$ can contain
$h$ by minimality of $r'$.
We can thus write 
$a(r)=r \setminus \cup_{r'\subsetneq r} \cup_{r''\subseteq r'} a(r'')$
or equivalently $a(r)=r \setminus \cup_{r''\subsetneq r} a(r'')$.
We can thus compute the cardinals $\card{a(r)}$ for $r \in \R$
in a dynamic programming fashion: first compute the inclusion relation in
time $O(|\R|^2\ell)$. Then consider the rules in $\R \cup \set{H}$
according to a topological order. The cardinal of each $a(r)$ can then
be computed from the cardinals of $r$ and $a(r')$ for $r'\subsetneq r$
as $\card{a(r)} = \card{r} - \sum_{r'\subsetneq r} \card{a(r')}$. Each
cardinal computation takes time $O(|\R|\ell)$.
\end{proof}

\subsection{Covering a collection with another}

As we will try to represent the atoms of a collection $\R$ with another
collection $\C$, we introduce the following notion of covering.
We say that a collection $\C$ \emph{covers} 
a set $r$ when $r=\cup_{r'\subseteq r, r'\in\C}\ r'$. Note that $\C$ clearly
covers $r$ when $r\in \C$. We say that $\C$ covers another collection $\R$
when it covers every set in $\R$. The following lemma shows how the 
two collections are then related in terms of atoms.

\begin{lemma}
\label{lem:repr}
Given a collection $\C$ covering a collection $\R$,
the atoms of $\C$ \emph{refine} those of $\R$ in the sense that
each atom of $\R$ is the union of some atoms of $\C$.
\end{lemma}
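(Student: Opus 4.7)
The plan is to exploit Proposition~\ref{prop:atoms}: the atoms of a collection are exactly the equivalence classes of headers under the relation ``matches the same sets of the collection''. Call this relation $\sim_\R$ for $\R$ and $\sim_\C$ for $\C$. Saying that each atom of $\R$ is a union of atoms of $\C$ is equivalent to saying that $\sim_\C$ refines $\sim_\R$, i.e.\ that $h \sim_\C h'$ implies $h \sim_\R h'$. So the whole proof reduces to checking this implication.

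To carry this out, I would fix $h, h' \in H$ with $h \sim_\C h'$ and an arbitrary set $r \in \R$, and show $h \in r \iff h' \in r$. Since $\C$ covers $\R$, we can write $r = \bigcup_{r' \in \C,\, r' \subseteq r} r'$. If $h \in r$, then some $r' \in \C$ with $r' \subseteq r$ contains $h$; the hypothesis $h \sim_\C h'$ then forces $h' \in r'$, hence $h' \in r$. The converse is symmetric, so $h \sim_\R h'$.

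Once $\sim_\C$ refines $\sim_\R$, the conclusion is immediate: any $\R$-atom $A$ is a $\sim_\R$-class, hence a union of $\sim_\C$-classes, i.e.\ a union of $\C$-atoms (with the convention that empty equivalence classes are discarded, which is fine because $A \neq \emptyset$ so it contains at least one header, whose $\sim_\C$-class is a $\C$-atom contained in $A$).

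I do not expect any real obstacle here; the statement is essentially a direct unfolding of the definition of covering together with the atom/equivalence-class identification from Proposition~\ref{prop:atoms}. The only thing to be careful about is not to conflate ``atom of $\R$'' with elements of $\R$: atoms are obtained from $\R$ via boolean combinations, and the refinement argument has to be phrased at the level of equivalence classes of headers, not at the level of the sets themselves.
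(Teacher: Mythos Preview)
Your argument is correct and matches the paper's own treatment: the paper does not spell out a proof but simply remarks that the lemma ``follows easily by defining an atom as intersection of sets or complement of sets from $\R$ and applying the covering property,'' which is exactly the unfolding you carry out via the equivalence-class characterization of Proposition~\ref{prop:atoms}. Your version is just a more explicit rendering of the same one-line idea.
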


\Cref{lem:repr} follows easily by defining an atom as intersection
of sets or complement of sets from $\R$ and applying the covering property.
We now note that the notion of atom representative set is related to that
of covering.

\begin{lemma}
\label{lem:atomcov}
Any weakly complete collection $\C$ is covered by the collection $\AT(\C)$
of its atom representative sets.
\end{lemma}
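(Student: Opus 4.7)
The plan is to reduce this directly to the identity already established inside the proof of \Cref{prop:card}, namely that for any weakly complete collection $\C$ and any $r\in\C$,
\[
r \;=\; \bigcup_{r'\subseteq r,\; r'\in\C} a(r'),
\]
where $a(r') = r' \setminus \bigcup_{r''\subsetneq r'} r''$. This is really where the work was done, so the lemma will follow essentially by bookkeeping.

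First I would fix an arbitrary $r\in\C$ and write the above union. Then I would observe that the summands with $a(r')=\emptyset$ contribute nothing, so the union is the same as $\bigcup_{r'\subseteq r,\; r'\in\AT(\C)} a(r')$. Since each $a(r')\subseteq r'$, I can enlarge each summand to $r'$ itself and get
\[
r \;=\; \bigcup_{\substack{r'\subseteq r \\ r'\in\AT(\C)}} a(r') \;\subseteq\; \bigcup_{\substack{r'\subseteq r \\ r'\in\AT(\C)}} r' \;\subseteq\; r,
\]
where the last inclusion is immediate because every set in the union is a subset of $r$. This chain of inclusions collapses to an equality, which is exactly the covering condition $r=\bigcup_{r'\subseteq r,\,r'\in\AT(\C)} r'$. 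Since $r\in\C$ was arbitrary, $\AT(\C)$ covers $\C$.

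There is essentially no obstacle here: the identity $r=\bigcup_{r'\subseteq r} a(r')$ is the nontrivial ingredient and it has already been proved. The only subtlety worth stating explicitly is the step of dropping the empty $a(r')$'s and then inflating the remaining $a(r')$'s back up to $r'$; both are one-line set-theoretic manipulations that rely only on $a(r')\subseteq r'$ and on the definition of $\AT(\C)$ as the atom-representative sets of $\C$. No further appeal to weak completeness is needed beyond what is already packaged in \Cref{prop:card}.
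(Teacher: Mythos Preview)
Your proof is correct, and it takes a genuinely different route from the paper's own argument. The paper proves \Cref{lem:atomcov} from scratch by contradiction: it picks a minimal $r\in\C$ not covered by $\AT(\C)$, argues that such an $r$ cannot be atom representative and hence satisfies $r=\bigcup_{r'\subsetneq r,\,r'\in\C} r'$, and then uses minimality of $r$ to show each such $r'$ is already covered by $\AT(\C)$, yielding the contradiction. You instead observe that the identity $r=\bigcup_{r'\subseteq r} a(r')$ was already established inside the proof of \Cref{prop:card}, and that the lemma follows from it by two one-line inclusions.

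Your approach is more economical: the minimality argument in the paper's proof of \Cref{lem:atomcov} is essentially a reprise of the minimality argument already carried out in \Cref{prop:card}, and you avoid this duplication by citing the earlier result directly. The paper's version has the minor virtue of being self-contained, but your observation that the lemma is really a corollary of work already done is the cleaner presentation.
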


Before proving this lemma, note the two following corollaries that can be derived from it.

\begin{corollary}
\label{cor:atomcov}
If a collection $\R$ is covered by
a weakly complete collection $\C$, it is also clearly covered by the atom
representative sets of $\C$.
\end{corollary}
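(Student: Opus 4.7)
The plan is to obtain the corollary as a simple composition of the two covering relations in play: the covering of $\R$ by $\C$ given in the hypothesis, and the covering of $\C$ by its own atom representative sets supplied by Lemma \ref{lem:atomcov}. The key observation is that covering is transitive along the inclusion order.

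Concretely, I would fix an arbitrary $r \in \R$ and expand it twice. First, the hypothesis yields $r = \bigcup_{c \in \C,\, c \subseteq r} c$. Then, since $\C$ is weakly complete, Lemma \ref{lem:atomcov} allows each such $c$ to be rewritten as $c = \bigcup_{a \in \AT(\C),\, a \subseteq c} a$. Substituting one decomposition into the other and using transitivity of $\subseteq$ (each atom representative set $a \subseteq c \subseteq r$ satisfies $a \subseteq r$), I collapse the double union to get
\[
r \;=\; \bigcup_{c \in \C,\, c \subseteq r}\; \bigcup_{a \in \AT(\C),\, a \subseteq c} a \;\subseteq\; \bigcup_{a \in \AT(\C),\, a \subseteq r} a \;\subseteq\; r,
\]
so equality holds throughout. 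Since $r \in \R$ was arbitrary, this is exactly the statement that $\AT(\C)$ covers $\R$.

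There is essentially no obstacle here: the proof boils down to one substitution step and the transitivity of set inclusion, which is what the qualifier \emph{clearly} in the statement signals. The only honest bookkeeping is to notice that $\AT(\C) \subseteq \C$, so the atom representative sets appearing in the refined expansion are legitimate elements of the covering collection. The real work has already been done in Lemma \ref{lem:atomcov}; the corollary simply packages it in the form that will be convenient when, in the next section, we build a representation of the atoms of $\R$ from a weakly complete collection $\C$ that covers $\R$.
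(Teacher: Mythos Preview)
Your argument is correct and is exactly the transitivity-of-covering reasoning that the paper leaves implicit: the paper gives no separate proof of this corollary, treating it as an immediate consequence of \Cref{lem:atomcov}, which is precisely the lemma you invoke and unfold. The side remark that $\AT(\C)\subseteq\C$ is not actually needed for your chain of inclusions to go through, but it does no harm.
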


\begin{corollary}
\label{cor:atomweak}
Given a weakly complete collection $\C$, the collection $\AT(\C)$ of its atom
representative sets is also weakly complete.
\end{corollary}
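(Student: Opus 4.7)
The plan is to verify the two defining clauses of weak completeness for $\AT(\C)$ by reducing each to the corresponding clause for $\C$, using Lemma~\ref{lem:atomcov} as the bridge. Since $\AT(\C)\subseteq\C$, any union-of-subsets identity that holds in $\C$ should transport to $\AT(\C)$ as soon as every $\C$-summand can be refined into a union of $\AT(\C)$-summands, which is precisely what the covering lemma provides.

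For the clause $H=\cup_{r\in\AT(\C)}r$, I would start from $H=\cup_{r\in\C}r$ (weak completeness of $\C$) and apply Lemma~\ref{lem:atomcov} to rewrite each $r\in\C$ as a union of sets from $\AT(\C)$ contained in $r$; this immediately exhibits $H$ as a union of $\AT(\C)$-members, while the reverse inclusion is trivial. For the intersection clause, I would take arbitrary $r,r'\in\AT(\C)\subseteq\C$ and invoke weak completeness of $\C$ to obtain $r\cap r'=\cup_{r''\in\C,\,r''\subseteq r\cap r'}r''$. Each such $r''$ is then refined by Lemma~\ref{lem:atomcov} into a union of atom representative sets, each contained in $r''\subseteq r\cap r'$. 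Substituting these refinements expresses $r\cap r'$ as the required union of $\AT(\C)$-members, with the reverse inclusion again immediate.

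I do not expect any real obstacle: the corollary is essentially a transitivity statement asserting that weak completeness is preserved when one passes to any subcollection that still covers the original one, so the whole argument reduces to a one-step application of the covering lemma on both defining clauses.
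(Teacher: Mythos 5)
Your proof is correct and follows exactly the route the paper intends: the paper states \Cref{cor:atomweak} as a direct consequence of \Cref{lem:atomcov}, and your argument—transporting both clauses of weak completeness from $\C$ to $\AT(\C)$ by refining each $\C$-summand into a union of atom representative sets contained in it—is precisely that derivation, carried out correctly.
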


\begin{proof}[of \Cref{lem:atomcov}]
Suppose for the sake of contradiction
that $\C$ is not covered by $\AT(\C)$ and consider a minimal set $r\in\C$
that is not covered by $\AT(\C)$. As $\AT(\C)$ contains 
all atom representative sets
and obviously cover them, we thus infer that $r$ is not atom representative
and satisfies $r=\cup_{r'\subsetneq r, r'\in \C}\ r'$.
The minimality of $r$ implies that each $r'\subsetneq r$ is covered by
$\AT(\C)$, i.e. $r'=\cup_{r''\in\AT(\C), r''\subseteq r'}\ r''$.
We thus get $r\subseteq\cup_{r''\in\AT(\C), r''\subseteq r}\ r''$. As a union
of sets included in $r$ is obviously included in $r$, we indeed have
$r=\cup_{r''\in\AT(\C), r''\subseteq r}\ r''$ which contradicts the fact that $r$
is not covered by $\AT(\C)$.
\end{proof}

\section{Representative header sets}\label{sec:representative-header-sets}

A straightforward way to verify that some property is satisfied by a network $G$
for all possible headers would be to identify one header at least in each
atom of the collection of rules of $G$ and then to test the property
for each such representative header. However, computing such headers
(even a single one) can be NP-hard as stated in \cite{anteater}. 
We rather propose to consider representative header sets defined as follows.

\begin{definition}
A \emph{representative set} $s$ for an atom $A$ of a collection $\R$
is a set $s$ containing $A$ such that a set $r\in \R$ contains $A$
iff it contains $s$. A \emph{representative collection} for $\R$
is a collection of representative sets, at least one for each atom of $\R$.
A representative collection is optimal if it has exactly one representative
set for each atom of $\R$.
\end{definition}

The idea is that a representative set $s$ for an atom $A$
allows to test efficiently what are the rules $r$ matched by any $h\in A$
by testing $s\subseteq r$ instead of $h\in r$ for each $r\in \R$.
Obviously, an atom $A$ is a representative set for itself.
Note that for any set $s\supseteq A$, any set $r\in \R$ containing $s$
contains also $A$. On the other hand, if $s$ is too large a set $r$ containing 
$A$ may not contain $s$. The following proposition shows how a weakly complete
collection can provide a representative collection.

\begin{proposition}
\label{prop:weakrepr}
If a weakly complete collection $\C$ covers a collection $\R$,
then the collection $\AT(\C)$ of its atom representative sets 
is a representative collection for $\R$.
\end{proposition}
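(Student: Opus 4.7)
The plan is to show that each $s\in\AT(\C)$ serves as a representative set for a uniquely determined atom of $\R$, and that every atom of $\R$ is hit by at least one such $s$. Given $s\in\AT(\C)$, \Cref{prop:weak} gives that $a(s)$ is a non-empty atom of $\A(\C)$, and \Cref{lem:repr} places $a(s)$ inside a uniquely determined atom $A$ of $\R$. I claim $s$ is a representative set for this $A$, i.e.\ $A\subseteq r\Leftrightarrow s\subseteq r$ for every $r\in\R$.

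The direction $s\subseteq r\Rightarrow A\subseteq r$ is immediate: $\emptyset\neq a(s)\subseteq s\subseteq r$ combined with $a(s)\subseteq A$ and the atom property of $A$ (so $A\cap r\in\{A,\emptyset\}$ for any $r\in\R$) forces $A\subseteq r$. For the reverse direction, the covering of $\R$ by $\C$ gives $r=\bigcup_{r'\in\C,\,r'\subseteq r}r'$; since $a(s)$ is an atom of $\A(\C)$ contained in this union, atom indivisibility yields $a(s)\subseteq r'$ for some $r'\in\C$ with $r'\subseteq r$. If $s\not\subseteq r'$, then $s\cap r'\subsetneq s$, and weak completeness of $\C$ lets us write $s\cap r'=\bigcup_{s''\in\C,\,s''\subseteq s\cap r'}s''$ with each $s''\subsetneq s$. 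But then $a(s)\subseteq s\cap r'\subseteq\bigcup s''$ would contradict the defining equality $a(s)=s\setminus\bigcup_{s''\in\C,\,s''\subsetneq s}s''$ together with $a(s)\neq\emptyset$. Hence $s\subseteq r'\subseteq r$, finishing the iff.

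For the second half, take any atom $A$ of $\R$ and any $h\in A$. Weak completeness of $\C$ gives $H=\bigcup_{t\in\C}t$, so $h\in t$ for some $t\in\C$; picking $s\in\C$ minimal for inclusion with $h\in s$ forces $h\in a(s)$ (otherwise $h$ would lie in some $s'\in\C$ with $s'\subsetneq s$, violating minimality), whence $s\in\AT(\C)$. Since atoms of $\C$ refine those of $\R$ by \Cref{lem:repr} and $h\in a(s)\cap A$, we get $a(s)\subseteq A$, so the first half shows that $s$ represents $A$. The chief obstacle is the implication $A\subseteq r\Rightarrow s\subseteq r$, which is the only step that must simultaneously invoke the covering of $\R$ by $\C$, the weak completeness of $\C$, and the minimality encoded in $a(s)$; the remainder reduces to atom indivisibility and straightforward set-theoretic bookkeeping.
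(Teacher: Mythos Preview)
Your proof is correct and follows essentially the same route as the paper: both isolate an atom $A'$ of $\C$ inside the given atom $A$ of $\R$, take its atom representative $s$, and establish the equivalence $A\subseteq r\Leftrightarrow s\subseteq r$ via the covering property and weak completeness (the argument that $s\cap r'\subsetneq s$ would force $a(s)$ into $\bigcup_{s''\subsetneq s}s''$ is exactly the paper's core step, just phrased slightly differently). The one organizational difference is that you additionally verify that \emph{every} $s\in\AT(\C)$ is a representative set for some atom of $\R$, whereas the paper only argues that every atom of $\R$ admits some representative from $\AT(\C)$; your extra direction is a natural completeness check given how ``representative collection'' is defined, and it comes essentially for free from the same ingredients.
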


\begin{proof}
Consider an atom $A$ of $\R$. According to \Cref{lem:repr}, $A$ must
contain some atom $A'$ of $\C$ which is associated to some atom representative
set $s\in\AT(\C)$ such that $A'=s\setminus\cup_{s'\subsetneq s}s'$.  
We show that $s$ is
a representative set for $A$.  Any rule $r\in\R$ containing $s$ contains some
headers of $A$ and thus contains $A$.  Now, consider some set $r\in\R$
which contains $A$. 
As $r$ is covered by $\C$,  we have $r=\cup_{c\subseteq r, c\in \C}\ c$. 
The weak complete property then imply
$s\cap r=\cup_{c\subseteq s\cap r, c\in \C}\ c$. 
As $s\cap r$ contains $A'$, 
some set $c\subseteq s\cap r$ must contain some element of $A'$.
As $A'$ is an atom of $\C$, $c$ contains $A'$. 
As $A'=s\setminus\cup_{s'\subsetneq s}s'$, $c$ cannot be strictly included in $s$.
We thus have $c=s$ and $r$ contains $s$.
\end{proof}

\subsection{Partial completion}

A natural approach for approximating the atoms of $\R$ is to compute
combinations of sets in $\R$ by intersection. 
We call \emph{combination} of $\R$ any
non-empty set obtained as the intersection of some sets in $\R$
(possibly, the set $H$ of all elements is obtained as an empty intersection).
A collection $\C$ is called a \emph{partial completion} of $\R$ if it
is composed of combinations of $\R$. 
The following lemma is somehow symmetrical to \Cref{lem:repr}.
It is a direct consequence of \Cref{prop:atoms}.

\begin{lemma}
\label{lem:completion}
Given any partial completion $\C$ of a collection $\R$,
the atoms of $\R$ refine the atoms of $\C$,
i.e. each atom of $\C$ is the union of some atoms of $\R$.
\end{lemma}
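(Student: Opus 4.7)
The plan is to reduce the statement to \Cref{prop:atoms} by establishing the inclusion $\A(\C) \subseteq \A(\R)$ between the generated algebras, and then invoking the fact (noted at the end of the proof of \Cref{prop:atoms}) that any element of $\A(\R)$ is a disjoint union of atoms of $\R$.

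First I would observe that every $c\in\C$ is, by the definition of partial completion, the intersection of finitely many sets of $\R$, so $c\in\A(\R)$. This gives $\C\subseteq\A(\R)$. Since $\A(\R)$ is by definition closed under intersection, union and complement, and $\A(\C)$ is the minimal such collection containing $\C$, we obtain $\A(\C)\subseteq\A(\R)$.

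Next I would pick an arbitrary atom $B$ of $\C$. By definition $B$ is a non-empty element of $\A(\C)$, hence a non-empty element of $\A(\R)$ by the previous step. \Cref{prop:atoms} tells us that the atoms of $\R$ partition $H$ and that $\A(\R)$ is isomorphic to the power set of this partition; in particular every element of $\A(\R)$ is a disjoint union of atoms of $\R$. Applying this to $B$ yields that $B$ is a union of some atoms of $\R$, which is exactly the refinement property claimed.

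There is essentially no obstacle here: the only thing that could be mistaken for one is checking that combinations (empty intersection included, giving $H$) really do sit inside $\A(\R)$, which is immediate. The reason a single sentence suffices in the main text is that all the real work was already done in \Cref{prop:atoms}; the present lemma is just its functorial consequence for the inclusion of generating collections $\C\subseteq\A(\R)$.
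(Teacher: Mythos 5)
Your proof is correct and follows exactly the route the paper intends: the paper dismisses this lemma as ``a direct consequence of \Cref{prop:atoms}'', and your argument ($\C\subseteq\A(\R)$ since combinations are intersections of sets of $\R$, hence $\A(\C)\subseteq\A(\R)$ by minimality, hence every atom of $\C$ is an element of $\A(\R)$ and therefore a union of atoms of $\R$ by the power-set isomorphism) is precisely the fleshed-out version of that one-liner. No gaps; the only implicit assumption is $\R\neq\emptyset$ so that $H\in\A(\R)$, which you correctly flag as immediate.
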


\subsection{Weak completion}
\label{sub:weakcompl}

The number of all possible 
combinations can be much larger than the number of atoms.
For example, the number of combinations of $n$ ranges can be as large as
$n(n-1)/2$ (consider for example $[1,n+1], [2,n+2],\ldots, [n,2n]$)
when the number of their atoms is always $2n+1$ at most.
We thus focus on partial completions having sufficient properties
for representing the rule collection we are interested in.
The following lemma shows how weak completeness can help.
Its proof directly follows from the weak completeness definition and is omitted.

\begin{lemma}
\label{lem:cover}
Any weakly complete collection $\C$ 
covering a collection $\R$ covers any combination of $\R$. 
\end{lemma}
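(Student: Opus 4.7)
The plan is to prove the lemma by induction on the number $k$ of sets involved in the combination, using weak completeness of $\C$ as the key structural tool.

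For the base cases, $k=0$ gives the empty intersection $H$, which is covered by $\C$ directly from the weak completeness condition $H=\cup_{r\in\C}r$, and $k=1$ gives a single set $r\in\R$ that is covered by assumption. So the nontrivial content is the inductive step.

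For the inductive step, I would take $r_1\cap\cdots\cap r_k$ and split it as $(r_1\cap\cdots\cap r_{k-1})\cap r_k$. By the induction hypothesis, $r_1\cap\cdots\cap r_{k-1}=\cup_i c_i$ for sets $c_i\in\C$ contained in $r_1\cap\cdots\cap r_{k-1}$, and because $\C$ covers $\R$ we also have $r_k=\cup_j c'_j$ for sets $c'_j\in\C$ contained in $r_k$. Distributing the intersection gives
\[
r_1\cap\cdots\cap r_k \;=\; \bigcup_{i,j} (c_i\cap c'_j).
\]
Each $c_i\cap c'_j$ is an intersection of two sets of $\C$, so the weak completeness of $\C$ yields $c_i\cap c'_j=\cup_{c\in\C,\ c\subseteq c_i\cap c'_j}\ c$. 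Substituting this into the display, $r_1\cap\cdots\cap r_k$ is exhibited as a union of sets of $\C$, each one included in $c_i\cap c'_j\subseteq r_1\cap\cdots\cap r_k$. This is exactly the covering condition.

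There is no real obstacle: the only subtlety is that weak completeness applies to intersections of pairs of sets in $\C$, not directly to intersections of sets in $\R$, so the induction has to route through first covering each factor by $\C$ and then invoking weak completeness inside $\C$. Once one sees that the covering property is preserved by pairwise intersection (intersect two covered sets, distribute the unions, apply weak completeness to each pairwise piece), the induction on $k$ is immediate.
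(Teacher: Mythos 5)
Your proof is correct: the paper omits this proof entirely (stating only that it ``directly follows from the weak completeness definition''), and your induction on the number of factors --- cover each factor by $\C$, distribute the intersection over the unions, and apply weak completeness to each pairwise piece $c_i\cap c'_j$ --- is exactly the direct argument the authors had in mind, including the $k=0$ base case where the empty intersection $H$ is covered via the condition $H=\cup_{r\in\C}r$. The only point worth noting is that some intermediate sets $c_i\cap c'_j$ (or even a prefix intersection $r_1\cap\cdots\cap r_{k-1}$) may be empty, but the empty set is vacuously covered by any collection, so the induction goes through unchanged.
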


We define a \emph{weak completion} of $\R$ as
a partial completion $\C$ of $\R$ that
is weakly complete and covers every rule in $\R$.
Note that \Cref{lem:repr,lem:completion} imply that the atoms
of a weak completion of $\R$ are the same as those of $\R$.
\Cref{prop:weakrepr} then implies that $\AT(\C)$ is an optimal
representative collection for $r$.
We thus get the following corollary.

\begin{corollary}
\label{cor:wcompl}
Given any weak completion $\C$ of a collection $\R$,
the collection $\AT(\C)$ of atom representative sets in $\C$
is an optimal representative collection for $\R$.
\end{corollary}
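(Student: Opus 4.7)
The plan is to leverage Proposition~\ref{prop:weakrepr} for the existence part and then upgrade it to optimality by showing that the atoms of $\C$ and the atoms of $\R$ are literally the same sets, so that $\AT(\C)$ has exactly one representative per atom of $\R$.

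First, I would invoke Proposition~\ref{prop:weakrepr} directly: since $\C$ is weakly complete and covers $\R$ (by definition of weak completion), $\AT(\C)$ is a representative collection for $\R$. This immediately gives the "at least one representative per atom" half of the statement, leaving only the "at most one" half, i.e.\ the injectivity of the map $s\in\AT(\C)\mapsto \text{(atom of $\R$ represented by $s$)}$.

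Next, to prove optimality I would argue that the atoms of $\C$ coincide with the atoms of $\R$. On one hand, $\C$ covers $\R$, so by Lemma~\ref{lem:repr} the atoms of $\C$ refine those of $\R$: each atom of $\R$ is a union of atoms of $\C$. On the other hand, $\C$ is a partial completion of $\R$, so by Lemma~\ref{lem:completion} the atoms of $\R$ refine those of $\C$: each atom of $\C$ is a union of atoms of $\R$. Combining these two refinement relations forces each atom of $\C$ to equal an atom of $\R$ (and vice versa), so the two atom collections are identical.

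Finally, I would conclude by tracing through the construction in the proof of Proposition~\ref{prop:weakrepr}. There, an atom $A$ of $\R$ is represented by the unique $s\in\AT(\C)$ whose associated atom $A'=s\setminus \bigcup_{s'\subsetneq s} s'$ is contained in $A$. By the previous paragraph $A'=A$, and by Proposition~\ref{prop:weak} each atom of $\C$ corresponds to a unique element of $\AT(\C)$. Hence the map from $\AT(\C)$ to the atoms of $\R$ is a bijection, which is precisely optimality.

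The main obstacle, if any, is the bookkeeping in the bijection step: one must be careful that Proposition~\ref{prop:weakrepr} does not a priori exclude two distinct $s,s'\in\AT(\C)$ from representing the same atom of $\R$. The equality of atom collections of $\C$ and $\R$, combined with the fact that distinct atom representatives in a weakly complete collection yield disjoint nonempty atoms (Proposition~\ref{prop:weak}), is what rules this out.
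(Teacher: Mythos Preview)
Your proposal is correct and follows essentially the same approach as the paper: the paper derives the corollary in two sentences by combining \Cref{lem:repr,lem:completion} (atoms of $\C$ coincide with atoms of $\R$) with \Cref{prop:weakrepr}, leaving the bijection implicit. Your write-up simply makes the optimality step explicit by invoking \Cref{prop:weak} to get the one-to-one correspondence between $\AT(\C)$ and the atoms of $\C$, which is exactly the missing detail the paper glosses over.
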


The following algorithm shows how to compute a weak completion of $\R$ which
is minimal. Note that such minimal weak completion is indeed unique.

\begin{algorithm}[htb]
  $\C := \set{H}$ \;
  \ForEach{$r \in \R $}{
    \ForEach{$c \in \C $}{
      \lIf{$c\cap r\notin \C$}{add $c\cap r$ to $\C$.} \;
    }
    Compute $|a(c)|$ for all $c\in \C$. \;
    Remove from $\C$ any $c$ such that $|a(c)|=0$. \;
  }
  \caption{Computing a weak completion of a rule collection $\R$.}
  \label{algo:weak}
\end{algorithm}

\begin{theorem}
\label{th:repr}
Given any rule collection $\R$, \Cref{algo:weak} computes
the minimal weak completion of $\R$ in $O(|\R|c^2\ell)$ time where $c$ is the
number of atoms of $\R$.
\end{theorem}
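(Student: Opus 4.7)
The plan is to prove the theorem by induction on the prefix $\R_k = \{r_1, \ldots, r_k\}$ of rules already processed by the outer loop. The invariant will be that, at the end of iteration $k$, the collection $\C$ maintained by the algorithm equals the (unique) minimal weak completion $\C^*_k$ of $\R_k$. The base case $\C = \{H\}$ and $\R_0 = \emptyset$ is immediate.

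Before the induction, I would give an explicit description of $\C^*_k$. For each atom $A'$ of $\R_k$, define $s_k(A') = \bigcap\{r \in \R_k : A' \subseteq r\}$. Using \Cref{prop:atoms} (the rules containing an atom are the same for every header of the atom), one checks that the collection $\{s_k(A') : A' \text{ atom of } \R_k\}$ covers every rule in $\R_k$ and, by \Cref{cor:atomweak}, is weakly complete, so it is a weak completion of $\R_k$. Conversely, in any weak completion of $\R_k$ the representative of $A'$ must be a combination of rules in $\R_k$ containing $A'$, and such a combination necessarily contains $s_k(A')$; together with the defining property of a representative set, this forces it to equal $s_k(A')$. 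Hence $\C^*_k = \{s_k(A') : A'\text{ atom of }\R_k\}$ exists and is unique.

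For the inductive step, denote by $\C'_k = \C_{k-1} \cup \{c \cap r_k : c \in \C_{k-1}\}$ the state of $\C$ just before the removal step. I would verify three facts: (i) $\C'_k$ is weakly complete, by a case analysis on whether two elements $c_1, c_2 \in \C'_k$ are old or new, each case reducing the intersection property to weak completeness of $\C_{k-1}$; (ii) $\C'_k$ covers $\R_k$, since it still covers $\R_{k-1}$ and contains $r_k = H \cap r_k$; and (iii) $s_k(A') \in \C'_k$ for every atom $A'$ of $\R_k$, using that $A'$ refines some atom $A$ of $\R_{k-1}$ and writing $s_k(A') = s_{k-1}(A)$ if $A' \subseteq \overline{r_k}$ or $s_k(A') = s_{k-1}(A) \cap r_k$ if $A' \subseteq r_k$. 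Since $\C'_k$ is then a weak completion of $\R_k$, \Cref{prop:weak} gives a bijection between its atom representatives and the atoms of $\R_k$. Because $s_k(A') \in \C'_k$ contains $A'$, and any strictly smaller $c \in \C'_k$ is a combination of rules in $\R_k$ that cannot contain $A'$ (otherwise it would contain $s_k(A')$, contradiction) and is therefore disjoint from the atom $A'$, we obtain $a(s_k(A')) = A'$, hence $\AT(\C'_k) = \{s_k(A') : A'\} = \C^*_k$, which is precisely what the removal step of the algorithm produces.

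For complexity, the invariant gives $|\C| \leq c$ at the start of every iteration, so $|\C'_k| \leq 2c$ during the iteration. The inner loop performs $O(c)$ intersections at $O(\ell)$ cost each, together with an $O(c\ell)$ membership test, totaling $O(c^2\ell)$ per outer iteration. \Cref{prop:card} computes all atom cardinals in $O(|\C'_k|^2 \ell) = O(c^2 \ell)$, and the removal step is linear in $|\C'_k|$. Summing over $|\R|$ iterations yields the claimed $O(|\R| c^2 \ell)$ bound. The main obstacle will be step (iii) of the inductive argument together with the concluding identification $\AT(\C'_k) = \C^*_k$: one must simultaneously track how the new atoms of $\R_k$ refine old atoms of $\R_{k-1}$ and match this with the atom-representative structure of the intermediate weakly complete collection $\C'_k$ guaranteed by \Cref{prop:weak}.
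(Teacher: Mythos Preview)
Your plan follows the same inductive skeleton as the paper's proof: show that after each outer iteration the collection $\C$ is a weak completion of the rules processed so far, with the intermediate collection $\C'_k$ shown weakly complete by a case analysis and the removal step justified via the atom-representative machinery. The one genuine methodological difference is that you give an explicit closed form $\C^*_k = \{s_k(A')\}$ for the minimal weak completion up front and prove the algorithm hits it exactly, whereas the paper first shows $\C$ is \emph{some} weak completion and only at the end argues minimality abstractly, using \Cref{lem:cover} and \Cref{prop:weakrepr} symmetrically to pin down each atom representative. Your version is more concrete and arguably cleaner; the paper's buys a shorter inductive step at the cost of a separate minimality paragraph.

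There is, however, a real slip in your step~(ii). You write that $\C'_k$ covers $r_k$ because $r_k = H \cap r_k \in \C'_k$, relying on $H \in \C_{k-1}$. But $H$ need not survive earlier removal steps: as soon as the rules processed so far union to all of $H$, the set $a(H)$ becomes empty and $H$ is discarded (take for instance $r_1 \cup r_2 = H$ with $r_1, r_2 \subsetneq H$; then $\C_2 = \{r_1, r_2, r_1 \cap r_2\}$ and $H \notin \C_2$). The fix is exactly the paper's argument: weak completeness of $\C_{k-1}$ gives $H = \bigcup_{c \in \C_{k-1}} c$, so $r_k = \bigcup_{c \in \C_{k-1}} (c \cap r_k)$ is \emph{covered} by $\C'_k$ even if $r_k \notin \C'_k$. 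A second, smaller point: your appeal to \Cref{cor:atomweak} to conclude that $\{s_k(A')\}$ is weakly complete is not self-contained, since that corollary presupposes the collection is $\AT(\C)$ for some already weakly complete $\C$. You should either note that it equals $\AT$ of the strong completion of $\R_k$, or verify weak completeness directly (which is an easy two-line argument).
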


The proof basically relies on \Cref{cor:atomweak,cor:atomcov} and on the fact that weak completeness of $\C$ is
maintained when adding a new set $r$ to it if we also add all $c\cap r$ for
$c\in \C$. The uniqueness of the minimal weak completion follows easily from
\Cref{lem:cover} and \Cref{prop:weakrepr}. The computation time
of \Cref{algo:weak} is dominated by the computation of the cardinals of
$a(r)$ for $r\in\C$ at each iteration of the main for loop and its time
complexity is a consequence of \Cref{prop:card}.
The details of the proof are similar to previous proofs and is deferred
to Appendix~A.

As the weak completion computed by  \Cref{algo:weak} 
contains only atom representative sets by construction,
\Cref{cor:wcompl} implies that it is an optimal representative
collection for $\R$. We thus get the following corollary.

\begin{corollary}
\label{cor:repr}
Given any rule collection $\R$, an optimal representative
collection for $\R$ can be computed in $O(|\R|c^2\ell)$ time where $c$ is the
number of atoms of $\R$.
\end{corollary}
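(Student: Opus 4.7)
The plan is to obtain the corollary as an immediate consequence of \Cref{th:repr} combined with \Cref{cor:wcompl}. Concretely, I would run \Cref{algo:weak} on $\R$ and argue that its output is already an optimal representative collection, without any further post-processing.

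First, I would invoke \Cref{th:repr} to assert that \Cref{algo:weak} terminates in time $O(|\R|c^2\ell)$ and produces the minimal weak completion $\C^*$ of $\R$. This directly gives the claimed running time. Next, I would observe that by construction, after processing each rule $r\in\R$, the algorithm removes every $c\in\C$ with $\card{a(c)}=0$, so the returned collection contains only atom representative sets; in other words, $\C^* = \AT(\C^*)$. The one subtle point to verify is that this cleanup does not destroy weak completeness of $\C^*$, but this is precisely \Cref{cor:atomweak} applied to the weakly complete collection obtained just after the inner \textbf{For Each} loop (which closes the current set of rules under intersection and thus preserves weak completeness).

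With $\C^*$ in hand, I would then apply \Cref{cor:wcompl}: since $\C^*$ is a weak completion of $\R$, the collection $\AT(\C^*)$ is an optimal representative collection for $\R$. Combining the two observations $\C^* = \AT(\C^*)$ and ``$\AT(\C^*)$ is an optimal representative collection'' yields that the output of \Cref{algo:weak} is itself an optimal representative collection for $\R$, computed in the stated time bound.

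The main obstacle, if any, is purely bookkeeping: making precise the invariant that at the end of each outer iteration the working collection $\C$ is a weak completion of the rules processed so far, so that invoking \Cref{cor:wcompl} at the end is legitimate. This invariant is essentially what underlies the correctness claim of \Cref{th:repr}, so no new ideas are required; the corollary really is just a restatement of \Cref{th:repr} through the lens of representative collections.
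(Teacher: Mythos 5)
Your proof is correct and follows essentially the same route as the paper: the paper justifies this corollary in the paragraph preceding it by noting that the weak completion output by \Cref{algo:weak} contains only atom representative sets by construction (so it equals its own $\AT(\cdot)$) and then applying \Cref{cor:wcompl}, with the time bound coming from \Cref{th:repr}. No further comment is needed.
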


Given a network $\mathcal{N}$ with rule collection $\R$ having $c$ atoms,
any classical verification task can obviously be centrally checked in
$O(c\ell\sum_{u\in V}\card{T(u)})$ time once an optimal representative collection
for $\R$ has been computed.
The reason is that the action taken by a router $u\in V$ with 
forwarding table $T(u)$ on any header of an atom $A$ 
is the action associated to first
rule of $T(u)$ containing $s$ where $s$ is the atom representative of $A$
(or drop if no rule contains $s$)
and can be identified in $O(\card{T(u)}\ell)$. Classical verification tasks
including NO-LOOP, NO-BLACKHOLE, 
REACHABILITY$(u,v)$, CONSIS\-TENCY$(u,v)$
can then be performed in linear time on the graph resulting from
actions of all routers.

Note that \Cref{algo:weak} relies on adding incrementally rules.
The following algorithm indicates how to maintain a weak completion
$\C$ of a collection $\R$ when a set $r$ is removed from $\R$.
It relies on the fact that $\C$ is a partial combination of $\R$
and assumes that we maintain for each combination $c\in\C$ the list
$R(c)$ of all sets in $\R$ that contain $ c $, and for each set $r\in\R$
the list $C(r)$ of all combinations $c$ such that $R(c)$ contains $r$.

\begin{algorithm}[htb]
  \ForEach{$c\in C(r)$}{
    Remove $r$ from $R(c)$ and replace $c$ by $\cap_{r'\in R(c)}r'$.\;
  }
  Compute $|a(c)|$ for all $c\in \C$. \;
  Remove from $\C$ any $c$ such that $|a(c)|=0$. \;
  \caption{Incremental deletion of $r\in\R$ in a weak completion $\C$ of $\R$.}
  \label{algo:del}
\end{algorithm}

The proof that the collection $\C'$ obtained by \Cref{algo:del}
is a weak completion of $\R'=\R\setminus\set{r}$ is similar to other proofs
of the paper and is omitted.

\section{Distributed verification}\label{sec:distributed-verification}

In the context of distributed verification, we focus on 
NO-BLACKHOLE and NO-LOOP tasks.

\subsection{Locally checking NO-BLACKHOLE}

NO-BLACKHOLE can easily be checked locally 
when $G$ is connected. 
Suppose a node 
$u\in V$  drops $h\in H$ when some node $v\in V$ forwards 
it or delivers it. Consider a path $P$ from $u$ to $v$ in the graph $G$.
Then there must exist two consecutive nodes $u',v'$ on $P$ such that
$u'$ drops $h$ and $v'$ forwards it or delivers it.
It is thus sufficient to perform the following 1-round computation
to check NO-BLACKHOLE. Each node $u$ sends its
forwarding table to its neighbors and symmetrically receive the table
of each neighbor.
Node $u$ then performs the following test for each neighbor $v$.
Compute a representative collection $\C$ of $T(u)\cup T(v)$.
For any rule $r_j$ of $T(v)$ with action deliver or forward,
check that any header $h$ following this rule
will follow a rule $r'$ of $T(u)$ with action
forward or deliver. 
As $\C$ is a representative collection of $T(u)\cup T(v)$,
it is sufficient to test that for each $c\in \C$ if the first rule 
containing $c$ in $T(v)$ has action deliver or forward, then so does
the first rule containing $c$ in $T(u)$ (the test fails if no rule in 
$T(u)$ contains $c$). The local computation is thus polynomial
in the maximum number of atoms of $T(u)\cup T(v)$ for $v\in N(u)$.

\subsection{Proof labeling scheme for NO-LOOP}

As loop detection cannot be performed locally, we propose a proof
labeling scheme for NO-LOOP (see e.g. \cite{prooflab} for 
a formal definition and an overview of proof labeling schemes). 
Our scheme is inspired by
the idea of labeling each node of a tree with its distance to the root,
a classical labeling scheme allowing
to check locally that the tree is indeed a tree (such a
scheme was introduced in \cite{labtree}).

Our scheme is based on local representative header sets defined as follows. 
For every node $u$ in $G$, we label $u$ with a weakly complete representative
collection $L(u)$ of $T(u)$ where each $c\in L(u)$ is associated with a
distance estimation $D(u,c)$.
We require that for any header $h$ that can follow a path $u_1,\ldots,u_k$ 
by applying rules $r_1,\ldots,r_k$, we have $D(u_1,c_1)>\cdots >D(u_k,c_k)$
for some sets $c_1,\ldots,c_k$ such that
for all $i\in 1..k$, $c_i\in L(u_i)$, $h\in c$ and $c\subseteq r_i$.
Given such labels, 
testing that distance labels always decrease during forward actions
can be checked locally in a similar manner as for NO-BLACKHOLE.
The network obviously satisfies NO-LOOP when the test succeeds.

Such a labeling may computed from a weak completion $\C$
of the rule collection $\R$ of
any network $\mathcal{N}$ satisfying NO-LOOP. Simply set $L(u)$ to $\C$ for all
$u\in V$. (Such a collection $\C$ is weakly complete and is a representative
collection for $\R$ according to \Cref{prop:weakrepr}.)
NO-LOOP then ensures that for each $c\in \C$ 
appropriate distance labels $D(u,c)$ can be associated to $c$ in each $L(u)$.
This labeling scheme can be very inefficient in terms of space as 
$\C$ can be exponentially larger than $\R$ which itself can be much larger
that any forwarding table of the network. We differ to future work the
study of how to gain space by using locality in labels.
However, we show next how some practical assumption allows to switch to a
very simple labeling scheme.

\subsection{More specific routing}

Interestingly, hierarchical networks use more specific routing in the
sense that rules applied successively to a packet during routing become
more and more specific.
Formally, we say that a rule $r'$ is
\emph{more specific} than a rule $r$ when $r'\subseteq r$.
A network satisfies MORE-SPECIFIC when the two following properties are
satisfied: a more specific rule always
has precedence (as with longest prefix matching for IP routing for example),
and the rules applied successively during routing of a packet become
more and more specific. The idea is that general rules are applied 
on nodes far away from the destination and rules become more specific
as a packet reach nodes closer to the destination. In hierarchical networks,
nodes in the subnetwork of the destination have finer grain view than
nodes outside the subnetwork. 
Checking that a network satisfies MORE-SPECIFIC can
clearly be checked locally similarly as NO-BLACKHOLE.

Now NO-LOOP can be tested with a simpler proof labeling scheme when
MORE-SPECIFIC is satisfied. If a packet follows a loop by successive
application of rules $r_1,\ldots,r_k$ in $u_1,\ldots,u_k$, we must
have $r_1=\cdots=r_k$ as MORE-SPECIFIC imply $r_1\supseteq \cdots\supseteq r_k$
and $r_k\supseteq r_1$. Distance labels in each node for all the rules $r$
in the network with same header set $r=r_1$ can be computed distributively with
Bellman-Ford algorithm (this is the basis of distance vector routing protocols).
In that case, fully distributed verification of NO-LOOP is thus possible.

\section{Conclusion}\label{sec:conclusion}

We have shown the tight fundamental link between weak completion as introduced
by Boutier and Chroboczek~\cite{boutier} and header equivalent classes 
as defined in Veriflow~\cite{veriflow}.
We have shown how to compute the minimal weak completion of a collection of
sets enabling exact representation of header classes of the set of rules
of a network. We believe that the incremental algorithms we have proposed
for updating such a representation may greatly be optimized, especially if
we make more assumptions about how sets of headers are represented. 
In the context of SDN for example, it is possible to use a data-structure 
for storing a collection of sets that allows to retrieve efficiently 
the sets intersecting a given query set. Storing our representation  in
such a data-structure would greatly improve the running time of our algorithms.
We reserve for future work the study of optimized incremental algorithm
for maintaining a weak completion.

In the context of forwarding table verification,
our representation could be used to improve existing practical implementations
such as proposed by Khurshid et al.~\cite{veriflow}. All the sets of our
representation are indeed included in their representation as they can
be expressed as combination of rules by intersection. However, they produce
many other sets (notably because of use of set differences in addition to
intersections) that are not really
necessary for representing header classes.

Weak completion could also be used as a replacement
of the strong completion proposed by Boutier and Chroboczek~\cite{boutier} for 
disambiguating a forwarding table with a minimal number of additional rules
in the SDN or firewall context. 
Strong completion consists in producing all possible combinations of the
original rules of the table and their number can quadratic in the number of
classes (i.e. the size of a minimal weak completion) in general as discussed
in \Cref{sub:weakcompl}.

\paragraph{Acknowledgment} The work presented in this paper has been partly carried out at LINCS (\url{http://www.lincs.fr}).

\newpage

\bibliographystyle{plain}

\newpage

\section*{Appendix A: proof of Theorem~\ref{th:repr}}

\begin{proof}[of \Cref{th:repr}]
We prove that the combination collection $\C$ computed by 
\Cref{algo:weak} provides a weak completion of $\R$.
Let $r_1,\ldots,r_n$ denote the rules in $\R$ in the same order as they 
are considered by the main for loop of \Cref{algo:weak}.
Set $\C_0=\set{H}$ and let $\C_i$ denote the combination collection contained
in $\C$ after having processed $r_1,\ldots,r_i$.
We show by induction on $i$ 
that $\C_i$ is a weak completion of $r_1,\ldots,r_i$.

Initially, $\C_0$ is clearly weakly complete. We thus consider $\C_i$ for $i>0$
assuming that $\C_{i-1}$ is a weak completion of $r_1,\ldots,r_{i-1}$.
We first show that $\C_i'=\C_i\cup\set{c\cap r_i\mid c\in \C_{i-1}}$ 
is weakly complete.
Considering $r,r'$ in that collection, we have to show the weakly complete
property $r\cap r'= \cup_{r''\subseteq r\cap r'}r''$.
This comes from the weak completeness of $\C_{i-1}$ if both $r$ and $r'$ where
already in $\C_{i-1}$. Otherwise, $r\cap r'=s\cap s'\cap r_i$ for some
$s,s'\in \C_{i-1}$. By weak completeness of $\C_{i-1}$,
we have $s\cap s' = \cup_{s''\subseteq s\cap s', s''\in \C_{i-1}}\ s''$ and thus
$s\cap s'\cap r_i = \cup_{s''\subseteq s\cap s', s''\in \C_{i-1}}\ s''\cap r_i$.
For all $s''\in \C_{i-1}$, $s''\cap r_i$ is in $\C_i'$ by
construction and is included in
$r\cap r' = s\cap s'\cap r_i$. We thus get 
$r\cap r'= \cup_{s''\subseteq r\cap r', s''\in \C_i'}\ s''$.
This proves that $\C_i'$ is weakly complete.
As $\C_i$ is obtained by removing sets which are not atom representative
from $\C_i'$, we have $\C_i=\AT(\C_i')$ and the weak completeness of
$\C_i$ follows from \Cref{cor:atomweak}.

We now show that $\C_i$ covers $r_1,\dots,r_i$. From the weak completeness
of $\C_{i-1}$, we have $H=\cup_{c\in \C_{i-1}}c$.
This implies $r_i=\cup_{c\in \C_{i-1}}c\cap r_i$. By definition of $\C_i'$
it thus covers $r_i$ in addition of $r_1,\ldots,r_{i-1}$ which are covered
by induction hypothesis. Since $\C_i=\AT(\C_i')$, \Cref{cor:atomcov}
implies that $\C_i$ also covers $r_1,\ldots,r_i$. It is thus a weak completion
of $r_1,\ldots,r_n$. And we get by induction that $\C$ is a weak completion of 
$\R$. 

This weak completion is minimal: any weak completion $\C'$ of $\R$
as same atoms as $\R$ by \Cref{lem:repr,lem:completion}.
Consider an atom $A$ of $\R$ and its atom representative sets $c,c'$
in $\C$ and $\C'$ respectively. 
By \Cref{lem:cover}, $\C$ covers $\C'$ and is thus
a representative collection of $\C'$ by \Cref{prop:weakrepr}.
Any set $s\in \C'$ thus contains $A$ when it contains $c$ and 
we thus have $c\subseteq c'$. We can symmetrically deduce $c'\subseteq c$.
As $\C$ contains only atom representative sets by construction, the collection
$\C$ is thus included in $\C'$. This proves that $\C$ is a minimal weak
completion and that such minimal weak completion is unique.

The computation time of \Cref{algo:weak} is dominated
by the computation of the cardinals of $a(r)$ for $r\in\C$ at each iteration
of the main for loop.
Its time complexity is thus a consequence of \Cref{prop:card}.
\end{proof}

\end{document}